\documentclass[11pt]{article}

\usepackage[letterpaper,margin=1in]{geometry}
\usepackage[T1]{fontenc}
\usepackage{lmodern}
\usepackage{microtype}
\usepackage{amsmath,amssymb,amsthm}
\usepackage{xcolor}

\definecolor{arxivblue}{RGB}{0,80,160}
\usepackage[
    colorlinks=true,
    allcolors=arxivblue
]{hyperref}
\usepackage{orcidlink}
\hypersetup{
    pdftitle={Parallel repetition in the two-player quantum cloning game},
    pdfauthor={Eli Coe Naig and Stephen A. Fenner},
    pdfsubject={Quantum information and parallel repetition}
}

\theoremstyle{definition}
\newtheorem{definition}{Definition}[section]
\theoremstyle{plain}
\newtheorem{theorem}[definition]{Theorem}
\newtheorem{proposition}[definition]{Proposition}
\newtheorem{lemma}[definition]{Lemma}
\newtheorem{corollary}[definition]{Corollary}
\theoremstyle{remark}
\newtheorem{remark}[definition]{Remark}

\newcommand{\id}{\mathbb I}
\newcommand{\Tr}{\operatorname{Tr}}
\newcommand{\ket}[1]{\lvert #1\rangle}
\newcommand{\bra}[1]{\langle #1\rvert}
\newcommand{\ketbra}[2]{\lvert #1\rangle\!\langle #2\rvert}
\newcommand{\norm}[1]{\lVert #1\rVert}
\newcommand{\QCG}{\mathrm{QCG}}
\newcommand{\two}{{\{0,1\}}}

\newcommand{\cK}{\mathcal{K}}
\newcommand{\SWAP}{\mathrm{SW}}
\newcommand{\Rreg}{{\mathsf R}}
\newcommand{\Areg}{{\mathsf A}}
\newcommand{\Breg}{{\mathsf B}}

\title{Parallel repetition in the two-player quantum cloning game}

\author{
  Eli Coe Naig\,\orcidlink{0009-0003-9324-2382}$^{\,1,2,3}$ and
  Stephen A.\ Fenner\,\orcidlink{0000-0002-6028-9323}$^{\,2}$\thanks{Corresponding author: \url{fenner.sa@gmail.com}.  This work is partially supported by the U.S. National Science Foundation's Research Experiences for Undergraduates award \#2447817.}\\[0.4em]
  \small $^{1}$Tulane University, New Orleans, Louisiana, USA\\
  \small $^{2}$Department of Computer Science and Engineering,\\
  \small University of South Carolina, Columbia, South Carolina, USA\\
  \small $^{3}$QSPARC Labs, USA
}

\date{August 1, 2026}

\begin{document}
\maketitle

\begin{abstract}
    We study parallel repetition in the two-player quantum cloning game, a
    monogamy-of-entanglement game motivated by quantum position verification.
    Colisson Palais, Escol\`a-Farr\`as, and Speelman bounded the value of
    $n$ copies between $(3/4)^n$ and $\cos^{2n}(\pi/8)$. For two copies, we
    prove that neither bound is tight.
    An explicit challenge-dependent strategy achieves value
    $(5+\sqrt{17})/16>9/16$, so strong parallel repetition fails for the
    unrestricted game. A block Gram
    matrix argument gives the upper bound
    $(11+\sqrt{65})/32<\cos^4(\pi/8)$ and strictly improves the previous
    parallel-repetition upper bound for every $n$. For every $n$, challenge-independent
    strategies have optimal value $(3/4)^n$.
\end{abstract}

\section{Introduction}\label{sec:intro}

In the two-player quantum cloning game $\QCG_2$~\cite{CES25}, a referee and
two noncommunicating players share a state.  The referee samples a uniform bit $b\in\two$,
announces $b$ to both players, and tests whether the player designated by $b$ shares a
Bell pair with the referee.  The optimum probability of this event over all possible strategies of the two players is denoted $\omega^*(\QCG_2)$.  The value for a single repetition is
$\omega^*(\QCG_2)=3/4$~\cite[Theorem~6]{CES25}.
The game is motivated by quantum position verification, part of the
position-based cryptography program introduced by Chandran et
al.~\cite{Chandran09}.  In the quantum setting, Buhrman et al.~\cite{Buhrman14}
showed that no protocol is secure against adversaries sharing arbitrarily large
entanglement, and constructed protocols secure against adversaries with none.

In $n$ parallel repetitions, the referee announces a string
$x\in\two^n$, and each player may respond to the full string.  Colisson
Palais, Escol\`a-Farr\`as, and Speelman~\cite[Theorem~8]{CES25} proved
\begin{equation}\label{eq:previous}
    \left(\frac34\right)^n
    \le \omega^*(\QCG_2^{\times n})
    \le \left(\frac12+\frac1{2\sqrt2}\right)^n
    =\cos^{2n}(\pi/8),
\end{equation}
and left the tightness of both bounds open.
Their upper bound adapts the projection-overlap method of Tomamichel et
al.~\cite{TFKW13}.

We show that neither endpoint is tight at $n=2$:
\begin{equation}\label{eq:main-interval}
    \left(\frac34\right)^2
    < \frac{5+\sqrt{17}}{16}
    \le \omega^*(\QCG_2^{\times2})
    \le \frac{11+\sqrt{65}}{32}
    < \cos^4(\pi/8).
\end{equation}
Strong parallel repetition would mean
$\omega^*(\QCG_2^{\times n})=\omega^*(\QCG_2)^n$ for all $n$.  Because
$\omega^*(\QCG_2)=3/4$, the leftmost inequality in \eqref{eq:main-interval}
shows that strong parallel repetition fails at $n=2$.  The rightmost
inequality strictly improves the upper bound of \eqref{eq:previous} at
$n=2$, and Proposition~\ref{prop:general} does so for every $n$.

A block Gram matrix indexed by the challenge strings expresses the game
value as an operator norm.  For a pair of challenges $x,y$, the norm of the
corresponding off-diagonal block depends on the \emph{direction} of each
differing bit: whether that coordinate switches from testing Alice under $x$
to testing Bob under $y$, or the reverse (Section~\ref{sec:caps}).  The resulting upper bounds on the block norms,
which we call \emph{caps}, assemble into the cap matrix of
Definition~\ref{def:cap}.  The proof of \cite[Theorem~8]{CES25} obtains an oriented overlap factor
$2^{-t_A}$ and, after exchanging the two challenges so that $t_A\ge t/2$, retains
only the Hamming-distance bound $2^{-t/2}$ in its aggregation (notation of
Section~\ref{sec:caps}).  Our contribution
is to retain the full directional information simultaneously in the block Gram
matrix, yielding a strictly smaller upper bound for every $n$;
to give the explicit challenge-dependent two-copy strategy; and to show
(Proposition~\ref{prop:challenge-independent}) that for every $n$ the optimal
challenge-independent value is exactly
 $(3/4)^n$, the $n$th power of the one-copy
value~\cite[Theorem~6]{CES25}, so the counterexample necessarily uses
challenge-dependent responses.

Section~\ref{sec:model} casts the game value as the operator norm of a
block Gram matrix indexed by the challenges.  Section~\ref{sec:caps}
bounds the off-diagonal blocks and assembles the cap matrix,
Section~\ref{sec:attack} presents the two-copy counterexample and the
exact challenge-independent value for every $n$, and Section~\ref{sec:upper} derives
the two-copy and general upper bounds.

\section{A block Gram matrix for the cloning game}\label{sec:model}

We recall the $n$-fold parallel repetition of the cloning game
of~\cite{CES25}.  Let $R_1,\ldots,R_n$ be the referee's qubits.  Alice and Bob hold output
qubits $A_1,\ldots,A_n$ and $B_1,\ldots,B_n$, together with arbitrary finite-dimensional
private registers $E_A$ and $E_B$.  Write
\[
    \Rreg=R_1\cdots R_n,\qquad
    \Areg=A_1\cdots A_n,\qquad
    \Breg=B_1\cdots B_n
\]
for the collective registers.  For $b\in\{0,1\}$, set
\[
    Q_{b,j}=\begin{cases}A_j,& \mbox{if $b=0$,}\\ B_j,& \mbox{if $b=1$.}\end{cases}
    \qquad
    \ket{\Phi^+}=\frac{\ket{00}+\ket{11}}{\sqrt2}.
\]
For a challenge $x\in\{0,1\}^n$, define
\[
    \Pi_x=\bigotimes_{j=1}^n
    \ketbra{\Phi^+}{\Phi^+}_{R_jQ_{x_j,j}}\otimes\id_{\mathrm{rest}},
\]
where the identity acts on the remaining registers: the untested output
qubit $Q_{1-x_j,j}$ at each coordinate $j$ and the private registers $E_A$
and $E_B$, so that $\Pi_x$ is a projector on
$\Rreg\otimes\Areg\otimes\Breg\otimes E_A\otimes E_B$.
A strategy consists of an initial state $\rho$ over the space of all registers and, for each challenge $x$, a
pair of \emph{local response} channels.  Each channel admits a Stinespring
isometry (for background, see~\cite{Watrous18}); appending a fixed pure state on an enlarged
private register, extending the isometry to a unitary, and retaining the
environment inside the private register lets us represent the responses by
challenge-dependent local unitaries
$U^x_{\Areg E_A}$ and $V^x_{\Breg E_B}$.  Since the initial state is
unrestricted, the fixed enlargement does not change the supremum, and
the Bell tests act as the identity on the private registers.  Write
\[
    S_x=\id_{\Rreg}\otimes U^x_{\Areg E_A}\otimes V^x_{\Breg E_B}.
\]
The game value is
\begin{equation}\label{eq:value}
    \omega^*(\QCG_2^{\times n})
    =\sup_{\rho,\{U^x,V^x\}_x}\frac1{2^n}\sum_x
    \Tr\!\left[
        \Pi_xS_x\rho S_x^\dagger
        \right].
\end{equation}
No constraint is placed on the shared initial state or on the referee's
marginal; we therefore call the game \emph{unrestricted}, in contrast to
the routing model of \cite[Section~5]{CES25}.

Write $C_x=\bra{\Phi_x}\otimes\id_{\mathrm{rest}}$, where
$\ket{\Phi_x}=\bigotimes_j\ket{\Phi^+}_{R_jQ_{x_j,j}}$, and define the
success map
\[
    A_x=C_xS_x
    \colon\mathcal H\to\mathcal K_x,
\]
where $\mathcal H$ is the full pre-measurement space and $\mathcal K_x$ collects
the subsystems left after the challenge-$x$ Bell contractions together with all
private registers.  Then $C_x^\dagger C_x=\Pi_x$ and
$C_xC_x^\dagger=\id_{\mathcal K_x}$, so each $C_x$ is a coisometry and
$A_x^\dagger A_x=S_x^\dagger\Pi_xS_x$.

By cyclicity of the trace,
\[
    \Tr\!\left[\Pi_xS_x\rho S_x^\dagger\right]
    =\Tr\!\left[A_x^\dagger A_x\,\rho\right],
\]
and since $\sum_xA_x^\dagger A_x$ is positive semidefinite, the supremum over
density operators $\rho$ of the average in \eqref{eq:value} equals
$\tfrac1{2^n}\norm{\sum_xA_x^\dagger A_x}$.  Stacking the maps into
\[
    \mathcal A\colon\mathcal H\to\bigoplus_x\mathcal K_x,
    \qquad
    \mathcal A\psi=(A_x\psi)_x,
\]
gives
\[
    \mathcal A^\dagger\mathcal A=\sum_xA_x^\dagger A_x,
    \qquad
    \mathcal A\mathcal A^\dagger=[M_{x,y}]_{x,y},
\]
where $M_{x,y}=A_xA_y^\dagger$ are the entries of the block Gram matrix.
The two operators share the same nonzero spectrum, so
$\norm{\sum_xA_x^\dagger A_x}=\norm{[M_{x,y}]_{x,y}}$.  Taking the supremum
over the response unitaries gives
\begin{equation}\label{eq:gram}
    \omega^*(\QCG_2^{\times n})
    =\frac1{2^n}\sup_{\{U^x,V^x\}_x}
    \norm{[M_{x,y}]_{x,y}}.
\end{equation}
If
\begin{equation}\label{eq:Wxy}
    W^{xy}=S_xS_y^\dagger=\id_{\Rreg}
    \otimes U^x_{\Areg E_A}\bigl(U^y_{\Areg E_A}\bigr)^\dagger
    \otimes V^x_{\Breg E_B}\bigl(V^y_{\Breg E_B}\bigr)^\dagger,
\end{equation}
then $M_{x,y}=C_xW^{xy}C_y^\dagger$, and
$M_{x,x}=C_xC_x^\dagger=\id_{\cK_x}$.

\begin{lemma}[Norm-matrix bound]\label{lem:norm-matrix}
    Let $B = [B_{x,y}]$ be a square matrix with equal size square blocks $B_{x,y}$ for $x,y$ running over the same domain,

    and let $N$ be an entrywise nonnegative matrix satisfying
    $\norm{B_{x,y}}\le N_{x,y}$ for all $x,y$, where $\norm{\cdot}$ is the operator norm. Then
    $\norm B\le\norm N$.

\end{lemma}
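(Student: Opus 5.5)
The plan is to compute the operator norm of $B$ variationally, reduce the block structure to scalars by the triangle inequality, and then compare with $N$ using that $N$ is entrywise nonnegative. By definition, $\norm{B}=\sup\,\lvert\langle u,Bv\rangle\rvert$ over unit vectors $u=(u_x)_x$ and $v=(v_y)_y$ in the direct sum of the block spaces, with each $u_x,v_y$ lying in the corresponding block space. This formulation suits the problem because it lets us handle the blocks one at a time.

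First expand the pairing blockwise:
\[
\langle u,Bv\rangle=\sum_{x,y}\langle u_x,B_{x,y}v_y\rangle .
\]
Next, apply the triangle inequality, then Cauchy--Schwarz together with the definition of the operator norm of each block, and finally the hypothesis $\norm{B_{x,y}}\le N_{x,y}$. This gives
\[
\lvert\langle u,Bv\rangle\rvert\le\sum_{x,y}\norm{u_x}\,\norm{B_{x,y}}\,\norm{v_y}\le\sum_{x,y}\norm{u_x}\,N_{x,y}\,\norm{v_y}.
\]
Here the nonnegativity of $\norm{u_x}$ and $\norm{v_y}$ is what allows the entrywise bound to be substituted term by term.

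Now define the scalar vectors $a=(\norm{u_x})_x$ and $c=(\norm{v_y})_y$. Their Euclidean norms satisfy $\norm a^2=\sum_x\norm{u_x}^2=\norm u^2=1$, and likewise $\norm c=1$. The last sum above equals $a^{\mathsf T}Nc=\langle a,Nc\rangle$, which is at most $\norm a\,\norm N\,\norm c=\norm N$. Taking the supremum over $u$ and $v$ then yields $\norm B\le\norm N$.

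The only delicate point is the passage from block vectors to the scalar vectors $a$ and $c$. This step needs the block decomposition to be orthogonal, which holds because the blocks act on a direct sum, so that norms add in squares. It also needs the entries of $N$ to be nonnegative; otherwise, replacing $\norm{B_{x,y}}$ by the upper bound $N_{x,y}$ inside the sum would not be monotone. Both conditions hold under the hypotheses, so I expect no real obstacle. The equal-size assumption on the blocks is not needed; it only matches the application to $[M_{x,y}]$.
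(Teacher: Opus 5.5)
Your proof is correct and is essentially the paper's argument: pair $B$ against unit block vectors, apply the triangle inequality and Cauchy--Schwarz blockwise, substitute the caps, and bound $a^{\mathsf T}Nc$ by $\norm N$ using $\norm a=\norm c=1$. One small correction to your closing remark: nonnegativity of $N$ already follows from $\norm{B_{x,y}}\le N_{x,y}$. The term-by-term substitution actually relies on the weights $\norm{u_x}\norm{v_y}\ge0$, as you noted earlier in the proof.
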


\begin{proof}
    For unit block vectors $u=(u_x)_x$ and $v=(v_y)_y$, define
    $a_x=\norm{u_x}$ and $b_y=\norm{v_y}$. Then
    $\norm a=\norm b=1$. By the triangle inequality,
    Cauchy--Schwarz, and the definition of the operator norm,
    \[
        |\langle u,Bv\rangle|
        =
        \left|
        \sum_{x,y}\langle u_x,B_{x,y}v_y\rangle
        \right|
        \le
        \sum_{x,y}
        \norm{u_x}\norm{B_{x,y}v_y}
        \le
        \sum_{x,y}
        \norm{u_x}\norm{B_{x,y}}\norm{v_y}
        \le
        a^{\mathsf T}Nb
        \le
        \norm N.
    \]
    Taking the supremum over unit $u$ and $v$ proves the claim.
\end{proof}

\section{Challenge-pair bounds}\label{sec:caps}

For $x,y\in\{0,1\}^n$, define
\[
    T_A=\{j:x_j=0,\ y_j=1\},\qquad
    T_B=\{j:x_j=1,\ y_j=0\},
\]
and let $t_A=|T_A|$, $t_B=|T_B|$, and $t=t_A+t_B$, written $t_A(x,y)$ and
$t_B(x,y)$ when the ordered pair needs emphasis.  For $x\ne y$, call the pair
a \emph{chain} if $t_A=0$ or $t_B=0$, and a \emph{loop} otherwise, that is,
if $t_A>0$ and $t_B>0$.  Thus all
changes in a chain point in one direction, whereas a loop contains changes in
both directions.

Set
\[
    P_x=S_x^\dagger\Pi_xS_x=A_x^\dagger A_x.
\]
These are the conjugated winning projectors denoted $M^x$ in the proof of
\cite[Theorem~8]{CES25}; despite the letter, their $M^x$ acts on the full
strategy space and is distinct from our block Gram entries $M_{x,y}$.
Each $A_x=C_xS_x$ is a coisometry, because $C_x$
is one (Section~\ref{sec:model}) and $S_x$ is unitary; hence
\[
    P_xP_y=A_x^\dagger M_{x,y}A_y,
    \qquad
    M_{x,y}=A_x\bigl(P_xP_y\bigr)A_y^\dagger.
\]
Since $\norm{A_x}=\norm{A_y}=1$, the two factorizations give
\[
    \norm{M_{x,y}}=\norm{P_xP_y}.
\]
We now follow the projector-relaxation argument of
\cite[Theorem~8]{CES25}, retaining its directional dependence on $t_A$.

\medskip\noindent\textit{Single-coordinate calculation.}
On three qubits $R,A,B$, define
\begin{align}
    \Gamma^A&=\ketbra{\Phi^+}{\Phi^+}_{RA}\otimes\id_B,
    &
    \Gamma^B&=\ketbra{\Phi^+}{\Phi^+}_{RB}\otimes\id_A, \label{eq:GammaAB}\\
    J_A&=\ket{\Phi^+}_{RA}\otimes\id_B,
    &
    J_B&=\ket{\Phi^+}_{RB}\otimes\id_A, \label{eq:JAB}
\end{align}

so that $\Gamma^A=J_AJ_A^\dagger$ and $\Gamma^B=J_BJ_B^\dagger$.  The
isometry $J_B$ takes $A$ into $R\otimes A\otimes B$, and $J_A^\dagger$
takes $R\otimes A\otimes B$ onto $B$.  For $\ket\psi\in A$,
\[
    J_A^\dagger J_B\ket\psi
    =\frac12\sum_{a,b\in\{0,1\}}
    \langle a\vert b\rangle_R\,
    \langle a\vert\psi\rangle_A\,\ket b_B
    =\frac12\sum_{a\in\{0,1\}}
    \langle a\vert\psi\rangle_A\,\ket a_B,
\]
because $\langle a\vert b\rangle_R=\delta_{a,b}$, the Kronecker delta.  Hence
\begin{equation}\label{eq:Lambda}
    J_A^\dagger J_B=\frac12\,\Lambda,
    \qquad
    \Lambda=\sum_{a\in\{0,1\}}\ket a_B\bra a_A,
\end{equation}
where $\Lambda\colon A\to B$ is the identification of their
computational bases.  Left multiplication by the isometry $J_A$ and right
multiplication by $J_B^\dagger$ preserve the operator norm.  This can be seen as follows: for any $X$ such that $J_AXJ_B^\dagger$ is well-defined, let $\ket\psi$ and $\ket\phi$ be unit vectors such that $\norm{X} = \norm{X\ket\psi}$ and $\norm{J_AXJ_B^\dagger} = \norm{J_AXJ_B^\dagger\ket\phi}$.  Then $J_B\ket\psi$ is also a unit vector, and we have
\begin{align*}
    \norm{X} &= \norm{X\ket\psi} = \norm{XJ_B^\dagger J_B\ket\psi} = \norm{J_AXJ_B^\dagger (J_B\ket\psi)} \le \norm{J_AXJ_B^\dagger} \\
    &= \norm{J_AXJ_B^\dagger\ket\phi} = \norm{XJ_B^\dagger\ket\phi}
    \le \norm{X}\norm{J_B^\dagger\ket\phi} = \norm X\norm{J_BJ_B^\dagger\ket\phi} \le \norm X,
\end{align*}
the last inequality because $J_BJ_B^\dagger$ is a projector.  Thus all quantities are equal.

Therefore,
\begin{equation}\label{eq:GammaAB-norm}
    \norm{\Gamma^A\Gamma^B}
    =\norm{J_A\bigl(J_A^\dagger J_B\bigr)J_B^\dagger}
    =\norm{J_A^\dagger J_B}
    =\frac12.
\end{equation}

\begin{lemma}[Directional overlap bound]\label{lem:chain}
    For $x\ne y$,
    \[
        \norm{M_{x,y}}\le2^{-t_A}
        \qquad\text{and}\qquad
        \norm{M_{x,y}}\le2^{-t_B};
    \]
    hence $\norm{M_{x,y}}\le2^{-\max(t_A,t_B)}$.  In particular, if $(x,y)$
    is a chain, then $\norm{M_{x,y}}\le2^{-t}$.
\end{lemma}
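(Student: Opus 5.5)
We use $\norm{M_{x,y}}=\norm{P_xP_y}$ and relax each projector to a larger one that only Alice's unitary (or only Bob's) can affect. For the bound $2^{-t_A}$, the coordinates $j\in T_A$ are tested on Alice in $x$ ($x_j=0$) and on Bob in $y$ ($y_j=1$).

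First, relax $\Pi_x$ to $\Pi_x^{A}=\bigotimes_{j\in T_A}\ketbra{\Phi^+}{\Phi^+}_{R_jA_j}\otimes\id$, which keeps only the $T_A$ tests. Likewise relax $\Pi_y$ to $\Pi_y^{B}=\bigotimes_{j\in T_A}\ketbra{\Phi^+}{\Phi^+}_{R_jB_j}\otimes\id$. Since $\Pi_x\le\Pi_x^A$ and these are commuting tensor-product projectors, $\Pi_x=\Pi_x\Pi_x^A$. The same holds for $y$. Conjugating by the unitaries gives $P_x=P_xP_x^A$ and $P_y=P_y^BP_y$, where $P_x^A=S_x^\dagger\Pi_x^AS_x$ and $P_y^B=S_y^\dagger\Pi_y^BS_y$. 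Hence
\[
\norm{P_xP_y}=\norm{P_xP_x^AP_y^BP_y}\le\norm{P_x^AP_y^B}.
\]

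Next, simplify the conjugations. $\Pi_x^A$ involves only $\Rreg$ and $\Areg$, so it commutes with $V^x$, and $P_x^A=U^{x\dagger}\Pi_x^AU^x$. Similarly $P_y^B=V^{y\dagger}\Pi_y^BV^y$. Here $U^{x}$ acts on $\Areg E_A$ and $V^{y}$ acts on $\Breg E_B$, which are disjoint systems, so these unitaries commute. Therefore
\[
\norm{P_x^AP_y^B}=\norm{\Pi_x^A\,U^xV^{y\dagger}\,V^y U^{x\dagger}\cdots}.
\]
Written out carefully, $P_x^AP_y^B=U^{x\dagger}V^{y\dagger}\bigl(\Pi_x^A\Pi_y^B\bigr)V^yU^x$, using that $V^{y\dagger}$ commutes with $\Pi_x^A$ and $U^x$ commutes with $\Pi_y^B$. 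This is the one step that needs care: $\Pi_x^A$ touches only $\Rreg\Areg$ and $\Pi_y^B$ touches only $\Rreg\Breg$. Unitary invariance then gives $\norm{P_x^AP_y^B}=\norm{\Pi_x^A\Pi_y^B}$.

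The product $\Pi_x^A\Pi_y^B$ is a tensor product over $j\in T_A$ of $\Gamma^A\Gamma^B$ on $R_jA_jB_j$, tensored with the identity on everything else. Operator norms multiply over tensor products, so \eqref{eq:GammaAB-norm} gives $\norm{\Pi_x^A\Pi_y^B}=2^{-t_A}$.

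The bound $2^{-t_B}$ is symmetric. Use the coordinates in $T_B$, relax $\Pi_x$ to its Bob tests there and $\Pi_y$ to its Alice tests there, and apply the same argument with the roles of $U$ and $V$ swapped. The result is $\norm{\Gamma^B\Gamma^A}^{t_B}=2^{-t_B}$, since $\norm{\Gamma^B\Gamma^A}=\norm{(\Gamma^A\Gamma^B)^\dagger}=1/2$. Combining the two bounds gives $2^{-\max(t_A,t_B)}$. For a chain, one of $t_A,t_B$ is zero, so the maximum equals $t$.
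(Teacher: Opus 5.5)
Your proof is correct and follows the paper's argument essentially step for step. You use the same relaxation to the $T_A$-tests, the same factorizations $P_x=P_xP_x^A$ and $P_y=P_y^BP_y$, conjugation of both relaxed projectors by the single unitary $U^x\otimes V^y$ (the paper's $S_{x,y}$), and the same tensor-product norm computation via Eq.~\eqref{eq:GammaAB-norm}. The only differences are cosmetic: the paper gets the $2^{-t_B}$ bound by applying the $t_A$ bound to $(y,x)$ and using $M_{y,x}=M_{x,y}^\dagger$ instead of rerunning the argument on $T_B$. You should also delete the garbled intermediate display with the trailing dots, since the ``written out carefully'' identity that follows it is the correct one.
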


\begin{proof}
    We prove $\norm{M_{x,y}}\le2^{-t_A}$ for an arbitrary ordered pair
    $(x,y)$ with $x\ne y$; since $M_{y,x}=A_yA_x^\dagger=M_{x,y}^\dagger$,
    $t_A(y,x)=t_B(x,y)$, and taking adjoints preserves the operator norm,
    applying this bound to $(y,x)$ then gives
    $\norm{M_{x,y}}\le2^{-t_B}$.  If $t_A=0$ the bound is immediate, because
    $\norm{M_{x,y}}=\norm{P_xP_y}\le1$; so assume $t_A\ge1$.

    For some $j\in T_A$ we have $x_j=0$ and $y_j=1$: challenge $x$ tests the pair
    $R_jA_j$, while challenge $y$ tests $R_jB_j$.  Keep only these tests and
    conjugate by the response unitaries:
    \[
        \widehat\Pi_x^A=\bigotimes_{j\in T_A}
        \ketbra{\Phi^+}{\Phi^+}_{R_jA_j}\otimes\id_{\mathrm{rest}},
        \qquad
        \widehat\Pi_y^B=\bigotimes_{j\in T_A}
        \ketbra{\Phi^+}{\Phi^+}_{R_jB_j}\otimes\id_{\mathrm{rest}},
    \]
    \[
        P_x^A=S_x^\dagger\widehat\Pi_x^AS_x,
        \qquad
        P_y^B=S_y^\dagger\widehat\Pi_y^BS_y,
    \]
    where each identity acts on every register not shown, including the
    referee qubits $R_j$ with $j\notin T_A$.  Because $\widehat\Pi_x^A$
    consists of a subset of the commuting tensor factors of $\Pi_x$, we have
    $\Pi_x=\Pi_x\widehat\Pi_x^A$, and conjugating by $S_x$ gives
    $P_x=P_xP_x^A$; likewise $P_y=P_y^BP_y$.  Hence
    $P_xP_y=P_x\bigl(P_x^AP_y^B\bigr)P_y$, and since
    $\norm{P_x},\norm{P_y}\le1$, multiplying by $P_x$ and $P_y$ cannot
    increase the operator norm:
    \[
        \norm{P_xP_y}\le\norm{P_x^AP_y^B}.
    \]

    The projector $\widehat\Pi_x^A$ acts as the identity on Bob's registers
    $\Breg E_B$, so conjugating it by $V^x_{\Breg E_B}$ or
    $V^y_{\mathsf BE_B}$ has no effect; similarly, $\widehat\Pi_y^B$ acts as
    the identity on Alice's registers $\Areg E_A$, so conjugating it by
    $U^x_{\Areg E_A}$ or $U^y_{\Areg E_A}$ has no effect.  Both
    relaxed projectors are therefore conjugated by the single unitary
    \[
        S_{x,y}=\id_{\Rreg}\otimes U^x_{\Areg E_A}\otimes V^y_{\Breg E_B}.
    \]
    Then
    \[
        P_x^A=S_{x,y}^\dagger\widehat\Pi_x^AS_{x,y},
        \qquad
        P_y^B=S_{x,y}^\dagger\widehat\Pi_y^BS_{x,y},
        \qquad
        P_x^AP_y^B
        =S_{x,y}^\dagger
        \bigl(\widehat\Pi_x^A\widehat\Pi_y^B\bigr)S_{x,y},
    \]
    since the inner $S_{x,y}S_{x,y}^\dagger$ cancels in the product.

    It remains to compute $\norm{\widehat\Pi_x^A\widehat\Pi_y^B}$.  For
    $j\in T_A$, let $\Gamma_j^A$ and $\Gamma_j^B$ denote the projectors of
    the single-coordinate calculation on the three qubits $R_jA_jB_j$ (see (\ref{eq:GammaAB})).
    Grouping the registers as $\bigotimes_{j\in T_A}(R_jA_jB_j)$ tensored
    with everything else and multiplying tensor factor by tensor factor,
    \[
        \widehat\Pi_x^A\widehat\Pi_y^B
        =\bigotimes_{j\in T_A}\bigl(\Gamma_j^A\Gamma_j^B\bigr)
        \otimes\id_{\mathrm{rest}}.
    \]
    The operator norm is unitarily invariant and multiplicative under
    tensor products, and each factor has norm $\tfrac12$ by Eq.~(\ref{eq:GammaAB-norm}) in the
    single-coordinate calculation, so
    \[
        \norm{M_{x,y}}
        =\norm{P_xP_y}
        \le\norm{P_x^AP_y^B}
        =\norm{\widehat\Pi_x^A\widehat\Pi_y^B}
        =\prod_{j\in T_A}\norm{\Gamma_j^A\Gamma_j^B}
        =2^{-t_A}.
    \]
    Finally, for a chain one of $t_A,t_B$ is zero, so
    $\max(t_A,t_B)=t_A+t_B=t$.
\end{proof}

Thus the directional estimate from the proof of \cite[Theorem~8]{CES25} can
be retained entrywise as $2^{-\max(t_A,t_B)}$ in the cap matrix below.

\medskip\noindent\textbf{Example: the pair $(00,01)$.}
Here $T_A=\{2\}$ and $T_B=\varnothing$: both challenges test $R_1A_1$,
while $x=00$ tests $R_2A_2$ and $y=01$ tests $R_2B_2$.  Removing the common
test on $R_1A_1$ leaves
\[
    \widehat\Pi_{00}^A=\ketbra{\Phi^+}{\Phi^+}_{R_2A_2}\otimes\id_{\mathrm{rest}},
    \qquad
    \widehat\Pi_{01}^B=\ketbra{\Phi^+}{\Phi^+}_{R_2B_2}\otimes\id_{\mathrm{rest}}.
\]
Their product is $\Gamma_2^A\Gamma_2^B\otimes\id_{\mathrm{rest}}$, of norm
$\tfrac12$ by (\ref{eq:GammaAB-norm}), so
$\norm{M_{00,01}}\le\tfrac12$ for every choice of response unitaries.

\begin{proposition}[Tightness of the loop bound]\label{prop:loop}
    There exist response unitaries for which the two-copy loop pair $(01,10)$
    satisfies $\norm{M_{01,10}}=1/2$, matching the bound
    $2^{-\max(t_A,t_B)}=2^{-1}$ of Lemma~\ref{lem:chain}.
\end{proposition}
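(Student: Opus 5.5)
We will exhibit explicit response unitaries built from local swaps, and then compute $\norm{M_{01,10}}=\norm{P_{01}P_{10}}$ directly. The trivial choice (all responses equal to the identity) is not enough. With trivial responses, $\Pi_{01}\Pi_{10}$ factors over the two coordinates as $(\Gamma_1^A\Gamma_1^B)\otimes(\Gamma_2^B\Gamma_2^A)$. By \eqref{eq:GammaAB-norm} its norm is $1/4$, not $1/2$. The aim is to make the two tested pairings overlap more by routing qubits so that they share a single output register on each side.

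\textbf{Choice of responses.} For challenge $x=01$, let Alice act trivially and let Bob apply $\SWAP_{B_1B_2}$. For challenge $y=10$, let Bob act trivially and let Alice apply $\SWAP_{A_1A_2}$. Conjugating the Bell tests by these swaps relabels the tested registers:
\[
P_{01}=\ketbra{\Phi^+}{\Phi^+}_{R_1A_1}\otimes\ketbra{\Phi^+}{\Phi^+}_{R_2B_1}\otimes\id_{\mathrm{rest}},
\qquad
P_{10}=\ketbra{\Phi^+}{\Phi^+}_{R_1B_1}\otimes\ketbra{\Phi^+}{\Phi^+}_{R_2A_1}\otimes\id_{\mathrm{rest}}.
\]
Both projectors now act nontrivially only on the four qubits $R_1R_2A_1B_1$, and act as the identity on $A_2,B_2,E_A,E_B$.

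\textbf{Computing the norm.} On $R_1R_2A_1B_1$ each projector is rank one, onto a product of two Bell pairs given by a perfect matching of the four qubits. Let $\ket\alpha=\ket{\Phi^+}_{R_1A_1}\ket{\Phi^+}_{R_2B_1}$ and $\ket\beta=\ket{\Phi^+}_{R_1B_1}\ket{\Phi^+}_{R_2A_1}$. Then
\[
\norm{P_{01}P_{10}}=\norm{\ketbra{\alpha}{\alpha}\,\ketbra{\beta}{\beta}}=|\langle\alpha|\beta\rangle|.
\]
Expanding in the computational basis gives
\[
\langle\alpha|\beta\rangle=\tfrac14\sum_{a,b,c,d}\delta_{r_1a}\delta_{r_2b}\delta_{r_1b}\delta_{r_2a}=\tfrac14\cdot2=\tfrac12 .
\]
The delta constraints force all four indices equal, which leaves 2 terms. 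Equivalently, the union of the two matchings is a single 4-cycle, so the overlap is $2^{\#\text{cycles}}/2^{\#\text{pairs}}=2/4$. Since $\norm{M_{x,y}}=\norm{P_xP_y}$ (Section~\ref{sec:caps}), this gives $\norm{M_{01,10}}=1/2$. For this pair $t_A=t_B=1$, so the value matches the bound $2^{-\max(t_A,t_B)}$ of Lemma~\ref{lem:chain}.

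\textbf{Main obstacle.} The only delicate step is the conjugation bookkeeping. One must check that $S_x^\dagger\Pi_xS_x$ under a swap really relabels $B_2\to B_1$ (respectively $A_2\to A_1$), and that the unused registers contribute only identity factors, which leave the norm unchanged. The overlap computation itself is routine.
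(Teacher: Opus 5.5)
Your proof is correct. Your responses ($V^{01}=U^{10}=\SWAP$) are not literally the paper's ($U^{01}=V^{10}=\SWAP$). Because $\SWAP$ is self-inverse, however, both give the same relative unitary $W^{01,10}=\id_{\Rreg}\otimes\SWAP_{\Areg}\otimes\SWAP_{\Breg}$, and hence the identical block $M_{01,10}$. In the conjugated picture you route both tests onto $R_1R_2A_1B_1$, whereas the paper's choice routes them onto $R_1R_2A_2B_2$.

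The real difference is in the verification. The paper applies $C_{01}W^{01,10}C_{10}^\dagger$ directly to basis vectors $\ket u_{A_1}\otimes\ket v_{B_2}$. It shows that $M_{01,10}$ is exactly $\tfrac12$ times the canonical unitary $\cK_{10}\to\cK_{01}$, working natively with the Gram block. You instead pass through $\norm{M_{x,y}}=\norm{P_xP_y}$ and reduce to the overlap of two rank-one Bell-matching states, $|\langle\alpha|\beta\rangle|=\tfrac12$. This is shorter and makes the mechanism visible: the two matchings close into a single $4$-cycle. It also recovers the paper's structural conclusion, since $P_{01}P_{10}=\tfrac12\ketbra{\alpha}{\beta}\otimes\id_{\mathrm{rest}}$ has exactly $\dim\cK_{01}$ nonzero singular values, all equal to $\tfrac12$.

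One cosmetic fix: your delta sum is indexed by $a,b,c,d$ while the summand uses $r_1,r_2,a,b$, so the sum should run over $r_1,r_2,a,b$.
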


\begin{proof}
    Take the private registers to be one-dimensional.  Set $x=01$ and $y=10$,
    and let the relative response
    unitaries swap the two output qubits on each side, exchanging
    $A_1\leftrightarrow A_2$ and $B_1\leftrightarrow B_2$ and acting as the
    identity on the private registers; this is realized, for example, by
    taking $U^{01}_{\Areg}=V^{10}_{\Breg}=\mathrm{SW}$ and every
    other response unitary the identity.  Plugging these into Eq.~(\ref{eq:Wxy}) gives
    \[
        W^{xy} = W^{01,10} = \id_{\Rreg} \otimes \SWAP_{\Areg}\otimes \SWAP_{\Breg}.
    \]
    Challenge $y=10$ tests $R_1B_1$ and
    $R_2A_2$, so $\cK_y$ retains $A_1$ and $B_2$; challenge $x=01$
    tests $R_1A_1$ and $R_2B_2$, so $\cK_x$ retains $A_2$ and $B_1$.
    
    We apply $M_{01,10}=C_{01}W^{xy}C_{10}^\dagger$ to a vector
    $\ket\psi := \ket u_{A_1}\otimes\ket v_{B_2}$ for $u,v\in\two$.  Applying the map $C_{10}^\dagger =\ket{\Phi^+}_{R_1B_1}\otimes\ket{\Phi^+}_{R_2A_2}\otimes\id_{A_1B_2}$ appends the Bell pairs of challenge $10$:
    \begin{align*}
        C_{10}^\dagger\ket\psi
        &= \left(\ket{\Phi^+}_{R_1B_1}\otimes\ket{\Phi^+}_{R_2A_2}\otimes\id_{A_1B_2}\right)(\ket u_{A_1}\otimes\ket v_{B_2}) \\
        &=\frac12\sum_{a,b\in\two}
        \ket a_{R_1}\otimes\ket b_{R_2}
        \otimes\ket u_{A_1}\otimes\ket b_{A_2}\otimes\ket a_{B_1}\otimes\ket v_{B_2};
    \end{align*}
    the swaps of $W^{xy}$ then carry $\ket u_{A_1}\otimes\ket b_{A_2}$ to
    $\ket b_{A_1}\otimes\ket u_{A_2}$ and
    $\ket a_{B_1}\otimes\ket v_{B_2}$ to
    $\ket v_{B_1}\otimes\ket a_{B_2}$:
    \[
        W^{xy}C_{10}^\dagger\ket\psi = \frac12\sum_{a,b\in\two}
        \ket a_{R_1}\otimes\ket b_{R_2}
        \otimes\ket b_{A_1}\otimes\ket u_{A_2}\otimes\ket v_{B_1}\otimes\ket a_{B_2};
    \]
    and the contractions
    $\bra{\Phi^+}_{R_1A_1}$ and $\bra{\Phi^+}_{R_2B_2}$ of $C_{01}$
    contribute a factor $\frac1{\sqrt2}\,\delta_{a,b}$ each.  Only the terms
    with $a=b$ survive, and there are two of them, so
    \[
        M_{01,10}\bigl(\ket u_{A_1}\otimes\ket v_{B_2}\bigr)
        =2\cdot\frac12\cdot\frac1{\sqrt2}\cdot\frac1{\sqrt2}\,
        \ket u_{A_2}\otimes\ket v_{B_1}
        =\frac12\,\ket u_{A_2}\otimes\ket v_{B_1}.
    \]
    Thus $M_{01,10}$ is $\tfrac12$ times the canonical unitary identification
    of $\mathcal K_{10}$ with $\mathcal K_{01}$, and $\norm{M_{01,10}}=1/2$.
\end{proof}

\begin{definition}[Cap matrix]\label{def:cap}
    For $x,y\in\{0,1\}^n$, define the $2^n\times2^n$ matrix $N_n$ by
    \[
        (N_n)_{x,y}=2^{-\max(t_A,t_B)}.
    \]
    On the diagonal, $t_A=t_B=0$, so $(N_n)_{x,x}=1$; for a chain, the entry equals $2^{-(t_A+t_B)}=2^{-t}$.

\end{definition}

Equation~\eqref{eq:gram}, Lemmas~\ref{lem:norm-matrix} and \ref{lem:chain},
and $M_{x,x}=\id_{\cK_x}$ (Section~\ref{sec:model}) imply
\begin{equation}\label{eq:cap-bound}
    \omega^*(\QCG_2^{\times n})\le2^{-n}\norm{N_n}.
\end{equation}

For $n=1$ the bound is tight: the matrix $N_1$ has spectrum $\{1/2,3/2\}$, so $2^{-1}\norm{N_1}=3/4$,
the exact one-copy value~\cite[Theorem~6]{CES25}.

\subsection{The two-copy cap matrix}\label{subsec:N2}

Order the challenges as $00,01,10,11$.  The diagonal entries are $M_{x,x}=\id$,
so $(N_2)_{x,x}=1$.  The four pairs at Hamming distance one,
\[
    \{00,01\},\quad \{00,10\},\quad \{01,11\},\quad \{10,11\},
\]
have cap $1/2$; the chain pair $\{00,11\}$ has cap $1/4$; and the loop pair
$\{01,10\}$ has cap $1/2$.  Hence
\begin{equation}\label{eq:N2-cap-matrix}
    N_2=
    \begin{pmatrix}
        1        & \tfrac12 & \tfrac12 & \tfrac14 \\[4pt]
        \tfrac12 & 1        & \tfrac12 & \tfrac12 \\[4pt]
        \tfrac12 & \tfrac12 & 1        & \tfrac12 \\[4pt]
        \tfrac14 & \tfrac12 & \tfrac12 & 1
    \end{pmatrix}.
\end{equation}

For the chain entry $M_{11,00}$, Lemma~\ref{lem:chain} gives
$\norm{M_{11,00}}\le2^{-t_B}=\tfrac14$; we now check that this cap holds
with equality for every choice of response unitaries.  Set
$W_A=U^{11}(U^{00})^\dagger$, acting on $A_1A_2E_A$, and
$W_B=V^{11}(V^{00})^\dagger$, acting on $B_1B_2E_B$, so that
$M_{11,00}=C_{11}(\id_{R_1R_2}\otimes W_A\otimes W_B)C_{00}^\dagger$ by Eq.~\eqref{eq:Wxy}.
The unitary $W_A$ is supported on $A_1A_2E_A$, disjoint from the registers
that $C_{11}$ contracts ($R_1B_1R_2B_2$), and $W_B$ is supported on
$B_1B_2E_B$, disjoint from the registers on which $C_{00}^\dagger$ inserts
Bell pairs ($R_1A_1R_2A_2$), so both commute past the contractions:
\[
    M_{11,00}
    =(W_A\otimes\id_{E_B})\,C_{11}C_{00}^\dagger\,(W_B\otimes\id_{E_A}),
\]
and it remains to compute $C_{11}C_{00}^\dagger$.  This product factors
over the two coordinates.  For each coordinate $j$, the corresponding
factor is $\tfrac12\Lambda_j^\dagger$, where
$\Lambda_j\colon A_j\to B_j$ is the coordinate-$j$ copy of the
identification $\Lambda$ from the single-coordinate relation
\eqref{eq:Lambda}. The adjoint appears because here the Bell bra
acts on $R_jB_j$ and the Bell ket lies on $R_jA_j$, reversing the
orientation.  Hence
\[
    M_{11,00}=\frac14\,(W_A\otimes\id_{E_B})
    \bigl(\Lambda_1^\dagger\otimes\Lambda_2^\dagger\otimes\id_{E_AE_B}\bigr)
    (W_B\otimes\id_{E_A})
\]
is $\tfrac14$ times a product of three unitaries, and
\begin{equation}\label{eq:M1100}
    \norm{M_{11,00}}=\frac14
    \qquad\text{for every choice of }U^{00},U^{11},V^{00},V^{11}.
\end{equation}

\subsection{The cap matrix for \texorpdfstring{$n$}{n} copies}\label{subsec:Nn}

Exchanging $x$ and $y$ exchanges $t_A$ and $t_B$, so $N_n$ is symmetric.
The extreme rows satisfy
\[
    (N_n)_{0^n,y}=2^{-|y|},
    \qquad
    (N_n)_{1^n,y}=2^{-(n-|y|)},
\]
where $|y|$ is the Hamming weight of $y$.
For example, at $n=3$ the entry $(N_3)_{000,111}=1/8$ is a chain, while
$(N_3)_{001,010}=1/2$ is a loop.

\section{Two-copy counterexample and challenge-independent strategies}
\label{sec:attack}

\begin{theorem}[Failure of strong parallel repetition]
    \label{thm:counterexample}
    The two-copy value satisfies
    \[
        \omega^*(\QCG_2^{\times2})
        \ge\frac{5+\sqrt{17}}{16}
        >\frac9{16}.
    \]
    Strong parallel repetition therefore fails for the unrestricted game.
\end{theorem}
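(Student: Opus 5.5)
We take the explicit strategy already used in Proposition~\ref{prop:loop}. The private registers are trivial, $U^{01}_{\Areg}=V^{10}_{\Breg}=\SWAP$, and every other response unitary is the identity. By Eq.~\eqref{eq:gram} it then suffices to show that the $16\times16$ block Gram matrix $M=[M_{x,y}]$ has an eigenvector with eigenvalue $(5+\sqrt{17})/4$. Indeed, $\tfrac14\cdot\tfrac{5+\sqrt{17}}4=\tfrac{5+\sqrt{17}}{16}$, and $\sqrt{17}>4$ gives the strict comparison with $9/16$. Each $\cK_x$ is four-dimensional, since it consists of the two untested output qubits.

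First I would compute all six off-diagonal blocks explicitly, as in the proof of Proposition~\ref{prop:loop}. We already know two of them. $M_{01,10}$ is $\tfrac12$ times the canonical identification $\ket u_{A_1}\ket v_{B_2}\mapsto\ket u_{A_2}\ket v_{B_1}$. $M_{11,00}$ is $\tfrac14$ times $\Lambda_1^\dagger\otimes\Lambda_2^\dagger$, because $W_A=W_B=\id$ for this pair.

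The four Hamming-distance-one blocks are the new work. For $\{00,10\}$ and $\{01,11\}$ no swap intervenes on one side. For the pairs touching $01$ or $10$, however, a single $\SWAP$ sits between the Bell contractions. It routes an output qubit into the wrong coordinate, so these blocks should no longer be $\tfrac12$ times an identification. Instead they should be partial, rank-deficient or ``twisted'' maps: for example, a contraction $\bra{\Phi^+}_{R_2A_2}$ applied to a Bell pair that was moved into $A_1$ yields a factor coupling $A_1$ to the referee in a way that projects onto a subspace.

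Next I would look for a common basis in which all blocks act compatibly. Natural candidates are computational basis states $\ket u\ket v$ on the retained qubits, or the symmetric/antisymmetric decomposition induced by $\SWAP$. In such a basis $M$ should decompose as a direct sum of small scalar matrices.

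By symmetry of the strategy under exchanging $(00,01)\leftrightarrow(11,10)$ together with $A\leftrightarrow B$, I would then restrict to ansatz vectors of the form $(\alpha\xi,\beta\eta,\beta\eta',\alpha\xi')$. The goal is to reduce the eigenproblem to the $2\times2$ matrix
\[
    \begin{pmatrix}1&1\\ 1&\tfrac32\end{pmatrix},
\]
whose trace is $5/2$ and determinant $1/2$, so its top eigenvalue is $(5+\sqrt{17})/4$. Heuristically, the $\tfrac32$ comes from the diagonal $1$ plus the full-strength $\tfrac12$ loop block $M_{01,10}$, which this strategy makes coherent. The off-diagonal $1$ comes from the four distance-one blocks, whose combined coupling on the chosen subspace is weaker than in the cap matrix $N_2$. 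Meanwhile the $\tfrac14$ chain block must cancel or drop out on this subspace. I would verify the exact coefficients by direct computation.

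The main obstacle is the distance-one blocks that involve a swap. I must determine their exact action and find the invariant subspace on which everything collapses to scalars. If the reduction to the $2\times2$ matrix does not come out cleanly, my fallback is to compute the Rayleigh quotient $\langle\psi,M\psi\rangle$ for the explicit candidate vector built from the eigenvector of that $2\times2$ matrix. Since $\norm{M}\ge\langle\psi,M\psi\rangle$, a lower bound is all that is needed.
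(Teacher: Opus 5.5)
There is a genuine gap: the strategy you fix at the outset cannot reach $(5+\sqrt{17})/16$. In fact it does worse than the trivial value $9/16$.

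Compute the blocks as in Proposition~\ref{prop:loop}. In $M_{00,01}=C_{00}\,\SWAP_{\Areg}\,C_{01}^\dagger$, the swap moves the Bell partner of $R_1$ into $A_2$, where $C_{00}$ contracts it against $R_2$. Both contractions are then forced, so
\[
M_{00,01}\bigl(\ket a_{A_2}\ket b_{B_1}\bigr)=\tfrac14\,\ket b_{B_1}\ket a_{B_2}.
\]
Likewise $M_{11,10}$ is $\tfrac14$ times a unitary. By contrast, $M_{00,10}$, $M_{11,01}$ and $M_{01,10}$ are $\tfrac12$ times unitaries, and $M_{11,00}$ is $\tfrac14$ times a unitary by \eqref{eq:M1100}. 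So the swap brings the loop block to full strength $\tfrac12$ only by halving two of the Hamming-distance-one blocks.

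By Lemma~\ref{lem:norm-matrix}, $\norm{M}$ is then at most the top eigenvalue of
\[
\begin{pmatrix}
1&\tfrac14&\tfrac12&\tfrac14\\
\tfrac14&1&\tfrac12&\tfrac12\\
\tfrac12&\tfrac12&1&\tfrac14\\
\tfrac14&\tfrac12&\tfrac14&1
\end{pmatrix}.
\]
Its Perron vector is invariant under $(00\;11)(01\;10)$. On vectors $(\alpha,\beta,\beta,\alpha)$ the matrix acts as $\begin{pmatrix}5/4&3/4\\3/4&3/2\end{pmatrix}$, whose top eigenvalue is $(11+\sqrt{37})/8$. Hence your strategy has value at most $(11+\sqrt{37})/32\approx0.534<9/16$. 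This is exact: identifying every $\cK_x$ with $\mathbb{C}^2\otimes\mathbb{C}^2$, all six blocks are scalar multiples of the identity, except the chain block, which is a multiple of the swap. No invariant-subspace reduction or Rayleigh-quotient fallback can therefore produce the eigenvalue you need.

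Your target matrix $\begin{pmatrix}1&1\\1&3/2\end{pmatrix}$ is the right one, but it demands more than your heuristic allows. Its off-diagonal entry $1=\tfrac12+\tfrac12$ equals the coupling in the symmetric reduction of $N_2$ itself, not something weaker. So all four distance-one blocks must act at their full cap $\tfrac12$, the loop block must also reach $\tfrac12$, and only the chain block may drop out.

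The missing idea is a response that makes the loop coherent without relocating Bell pairs into the other coordinate. The paper uses $G=(-iY)\otimes\ketbra{0}{0}+\id\otimes\ketbra{1}{1}$ with
\[
U^{01}_{\Areg}=V^{10}_{\Breg}=G,\qquad U^{10}_{\Areg}=V^{01}_{\Breg}=\SWAP\,G\,\SWAP,
\]
and identities on $00$ and $11$. On a mixed challenge, each player applies $-iY$ to its tested qubit, controlled on its untested qubit. One checks that the four distance-one blocks and $M_{01,10}$ then all have norm $\tfrac12$, each of rank two. The paper certifies the value through the characteristic polynomial of the integer matrix $K=4H$ and the explicit eigenvector $w$. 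The factor $\lambda^2-10\lambda+8$ of that polynomial is exactly the characteristic polynomial of $4\begin{pmatrix}1&1\\1&3/2\end{pmatrix}$.
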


\begin{remark}[Computer verification]\label{rem:code}
    The Python script \texttt{verify\_counterexample.py}, submitted as an
    ancillary file with this preprint, reproduces the computations in the
    proof below.
\end{remark}

\begin{proof}
    Take the private registers $E_A$ and $E_B$ to be one-dimensional.  Let $Y$
    be the Pauli $Y$ matrix and define
    \[
        G=(-iY)\otimes\ketbra{0}{0}+\id\otimes\ketbra{1}{1},
        \qquad
        -iY=\begin{pmatrix}0&-1\\1&0\end{pmatrix}.
    \]
    The gate $G$ applies $-iY$ to the first qubit when the second qubit is
    $\ket0$.
    Let $\mathrm{SW}$ interchange the qubits and set
    \begin{align*}
        U^{00}_{\Areg}=U^{11}_{\Areg}
        =V^{00}_{\Breg}=V^{11}_{\Breg} & =\id,                     \\
        U^{01}_{\Areg}=V^{10}_{\Breg}  & =G,                       \\
        U^{10}_{\Areg}=V^{01}_{\Breg}  & =\SWAP\,G\,\SWAP.
    \end{align*}

    For these responses, define
    \[
        H=\sum_{x\in\{0,1\}^2}
        S_x^\dagger\Pi_xS_x.
    \]
    Choosing $\rho$ as the rank-one projector onto a top eigenvector of $H$
    gives $\lambda_{\max}(H)/4$ for the value optimized in Eq.~\eqref{eq:value}.  In the computational basis, the
    two-copy acceptance projectors $\Pi_x$ have entries in
    $\tfrac14\mathbb Z$, and the response unitaries have integer entries, so
    $K=4H$ is a $64\times64$ integer matrix with characteristic polynomial
    \[
        \det(\lambda\id-K)=\lambda^{48}(\lambda-2)^2(\lambda-3)^3
        (\lambda-4)^2(\lambda-5)^3
        (\lambda^2-7\lambda+4)
        (\lambda^2-10\lambda+8)(\lambda^2-11\lambda+22).
    \]
    The largest root is $5+\sqrt{17}$; hence
    \[
        \frac14\lambda_{\max}(H)
        =\frac1{16}\lambda_{\max}(K)
        =\frac{5+\sqrt{17}}{16}.
    \]
    This exceeds $9/16$ since $\sqrt{17}>4$.

    Explicitly, in the computational basis with the qubits ordered
    $R_1,R_2,A_1,A_2,B_1,B_2$, the vector
    \begin{align*}
        w={} & \bigl(4\sqrt{17}-16\bigr)
        \bigl(\ket{000011}+\ket{001100}\bigr)
        +\bigl(13-3\sqrt{17}\bigr)
        \bigl(\ket{000110}+\ket{001001}\bigr)                           \\
             & +\bigl(4-\sqrt{17}\bigr)
        \bigl(\ket{010010}+\ket{011000}+\ket{100001}+\ket{100100}\bigr) \\
             & +\ket{010111}+\ket{011101}+\ket{101011}+\ket{101110}
        +\bigl(2\sqrt{17}-8\bigr)\ket{110000}+2\ket{111111}
    \end{align*}
    satisfies $Kw=(5+\sqrt{17})\,w$ by direct computation, with
    $\norm w^2=1972-476\sqrt{17}$, so choosing $\rho$ as the rank-one
    projector onto $w/\norm w$ attains the value $(5+\sqrt{17})/16$ for this strategy.
\end{proof}
\begin{remark}[Other gate choices]\label{rem:gate-family}
    The choice $-iY$ is not unique. Replacing it by any anti-diagonal
    unitary
    \[
        W_{\alpha,\beta}
        =
        \begin{pmatrix}
            0 & e^{i\alpha}\\
            e^{i\beta} & 0
        \end{pmatrix},
        \qquad \alpha,\beta\in\mathbb R,
    \]
    leaves the strategy value unchanged, since the corresponding
    averaged acceptance operators are unitarily equivalent under
    diagonal phase conjugations. This family includes $X$ and $Y$;
    the choice $-iY$ is convenient because $K$ has integer entries.
\end{remark}

\begin{definition}[Challenge-independent strategies]
    A strategy is \emph{challenge-independent} if $U^x_{\Areg E_A}$ and $V^x_{\Breg E_B}$ are independent of the challenge $x$.  In this case, we write $U_{\Areg E_A}$ and $V_{\Breg E_B}$ for $U^x_{\Areg E_A}$ and $V^x_{\Breg E_B}$, respectively.
\end{definition}

\begin{proposition}
    \label{prop:challenge-independent}
For every $n \ge 1$, the optimal value of $\QCG_2^{\times n}$ over
    challenge-independent strategies is exactly $(3/4)^n$.  In particular,
    every strategy with value exceeding $(3/4)^n$, such as that of
    Theorem~\ref{thm:counterexample} at $n=2$, is challenge-dependent.
\end{proposition}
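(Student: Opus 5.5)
For challenge-independent strategies the response unitaries do not depend on $x$, so $S_x=S$ for every $x$. Then $P_x=S^\dagger\Pi_xS$, and the value becomes
\[
\frac1{2^n}\sup_\rho\Tr\!\Bigl[\sum_x\Pi_x\,S\rho S^\dagger\Bigr]=\frac1{2^n}\Bigl\|\sum_x\Pi_x\Bigr\|.
\]
This holds because $\rho\mapsto S\rho S^\dagger$ is a bijection on density operators. The plan is to compute this norm exactly.

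The key observation is that the projectors factor over coordinates. Regroup the Hilbert space as $\bigotimes_{j=1}^n(R_jA_jB_j)\otimes E_A\otimes E_B$, with the private registers acting trivially. Then
\[
\Pi_x=\bigotimes_j\Gamma_j^{x_j}\otimes\id_{E_AE_B},
\]
where $\Gamma_j^0=\Gamma^A_j$ and $\Gamma_j^1=\Gamma^B_j$ are the single-coordinate projectors of \eqref{eq:GammaAB}. Expanding the product gives
\[
\sum_x\Pi_x=\bigotimes_{j=1}^n\bigl(\Gamma^A_j+\Gamma^B_j\bigr)\otimes\id .
\]
The norm of a tensor product of positive semidefinite operators is the product of the norms, so $\|\sum_x\Pi_x\|=\|\Gamma^A+\Gamma^B\|^n$.

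It remains to show $\|\Gamma^A+\Gamma^B\|=3/2$ on three qubits. For two projectors $P,Q$ one has $\|P+Q\|=1+\|PQ\|$; this follows from the Jordan two-projection decomposition, where on each two-dimensional block with angle $\theta$ the top eigenvalue is $1+\cos\theta$ and $\|PQ\|=\max\cos\theta$. Combined with \eqref{eq:GammaAB-norm}, this gives $1+\tfrac12=\tfrac32$.

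Alternatively, one can verify this directly without the Jordan decomposition. Writing $\Gamma^A+\Gamma^B=J_AJ_A^\dagger+J_BJ_B^\dagger$, it has the same nonzero spectrum as the $2\times2$ block Gram operator
\[
\begin{pmatrix}\id & \tfrac12\Lambda^\dagger\\ \tfrac12\Lambda & \id\end{pmatrix},
\]
using \eqref{eq:Lambda}. Since $\Lambda$ is unitary, this operator has eigenvalues $1\pm\tfrac12$, so the norm is exactly $3/2$. This is the same as the $n=1$ computation of $N_1$.

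Therefore the challenge-independent optimum is $2^{-n}(3/2)^n=(3/4)^n$. The value is attained by $U=V=\id$ with $\rho$ a product of top eigenvectors of $\Gamma^A+\Gamma^B$, which gives both the upper bound and achievability. The final sentence of the proposition follows by contraposition: Theorem~\ref{thm:counterexample} exceeds $9/16$, so that strategy must be challenge-dependent.

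The only delicate step is the tensor factorization of $\sum_x\Pi_x$. One must check that the identity factors on untested qubits line up so that $\Pi_x$ really equals $\bigotimes_j\Gamma_j^{x_j}$ after reordering registers, which is a bookkeeping matter rather than a real obstacle.
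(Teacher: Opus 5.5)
Your proposal is correct and follows essentially the same route as the paper: absorb the fixed unitaries into $\rho$, factor $\sum_x\Pi_x$ coordinatewise as $(\Gamma^A+\Gamma^B)^{\otimes n}\otimes\id_{E_AE_B}$, and use the two-projection (Halmos/Jordan) decomposition to get $\norm{\Gamma^A+\Gamma^B}=1+\norm{\Gamma^A\Gamma^B}=3/2$. Your alternative check via the $2\times2$ block Gram operator of $J_A,J_B$ is a self-contained substitute for the Halmos step (it is exactly the $N_1$ computation), but it does not change the overall argument.
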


\begin{proof}
    Write
    \[
        S=\id_{\Rreg}\otimes U_{\Areg E_A}\otimes V_{\Breg E_B}.
    \]
    Because the initial state is unrestricted and
    $\rho\mapsto S\rho S^\dagger$ is a bijection of the density operators,
    the fixed response unitaries may be absorbed into $\rho$.  Therefore the
    optimal challenge-independent value is
    \[
        \sup_\rho\frac1{2^n}\sum_{x\in\{0,1\}^n}
        \Tr\bigl[\Pi_xS\rho S^\dagger\bigr]
        =\sup_\rho\frac1{2^n}\sum_{x\in\{0,1\}^n}
        \Tr\bigl[\Pi_x\rho\bigr]=\biggl\lVert\frac1{2^n}\sum_{x\in\{0,1\}^n}\Pi_x\biggr\rVert.
    \]
    The operator inside the norm is positive semidefinite, so this value is
    attained by a top eigenvector.

    Grouping the registers coordinate by coordinate gives
    \[
        \frac1{2^n}\sum_{x\in\{0,1\}^n}\Pi_x
        =\left(\frac12\bigl(\Gamma^A+\Gamma^B\bigr)\right)^{\otimes n}
        \otimes\id_{E_AE_B},
    \]
    where $\Gamma^A$ and $\Gamma^B$ are the one-copy acceptance projectors
    from the single-coordinate calculation preceding
    Lemma~\ref{lem:chain} (Eq.~\eqref{eq:GammaAB}).  That calculation gives
    $\norm{\Gamma^A\Gamma^B}=\tfrac12$ (Eq.~\eqref{eq:GammaAB-norm}).

    By Halmos's two-subspaces theorem~\cite{Halmos69}, the projections
    $\Gamma^A$ and $\Gamma^B$ are jointly block-diagonalizable into blocks of
    dimension at most two, and on a two-dimensional block
    with singular value $s$ of $\Gamma^A\Gamma^B$ the sum $\Gamma^A+\Gamma^B$
    has eigenvalues $1\pm s$.  Maximizing over blocks gives
    \[
        \norm{\Gamma^A+\Gamma^B}=1+\norm{\Gamma^A\Gamma^B}=\frac32.
    \]
    Hence $\norm{\tfrac12(\Gamma^A+\Gamma^B)}=\tfrac34$, and since the
    operator norm is multiplicative under tensor products (with
    $\norm{\id_{E_AE_B}}=1$),
    \[
        \biggl\lVert\frac1{2^n}\sum_{x\in\{0,1\}^n}\Pi_x\biggr\rVert
        =\left(\frac34\right)^n.
    \]
\end{proof}

In the single-copy game, a challenge-independent strategy already achieves
the optimal value $3/4$~\cite[Theorem~6]{CES25}.  The advantage of
challenge-dependent responses therefore emerges only when the game is
repeated.

\section{Upper bounds from the cap matrix}\label{sec:upper}

\begin{corollary}[Two-copy bound]\label{cor:two-copy}
    \[
        \omega^*(\QCG_2^{\times2})
        \le\frac{11+\sqrt{65}}{32}.
    \]
\end{corollary}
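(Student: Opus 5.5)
The plan is to apply the cap bound \eqref{eq:cap-bound} with $n=2$, which gives $\omega^*(\QCG_2^{\times2})\le\tfrac14\norm{N_2}$, with $N_2$ the explicit matrix \eqref{eq:N2-cap-matrix}. All the quantum content is already in Lemmas~\ref{lem:norm-matrix} and~\ref{lem:chain}. What remains is to compute the operator norm of this real symmetric $4\times4$ matrix. Since $N_2$ is symmetric, $\norm{N_2}$ is the largest absolute value of an eigenvalue. Since $N_2$ is entrywise positive, Perron--Frobenius shows that this is the largest eigenvalue. Alternatively, we will simply list all four eigenvalues and check that they are positive.

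To diagonalize $N_2$, I would use its symmetry under the involution $00\leftrightarrow11$, $01\leftrightarrow10$ (complementing both bits). One checks directly from \eqref{eq:N2-cap-matrix} that this permutation commutes with $N_2$. The space therefore splits into the symmetric vectors $(a,b,b,a)$ and the antisymmetric vectors $(a,b,-b,-a)$, in the order $00,01,10,11$.

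On the antisymmetric subspace, $N_2$ acts diagonally. It sends $(a,b,-b,-a)$ to $(\tfrac34a,\tfrac12b,-\tfrac12b,-\tfrac34a)$, so the eigenvalues there are $\tfrac34$ and $\tfrac12$. On the symmetric subspace, $N_2$ acts by the $2\times2$ matrix
\[
\begin{pmatrix}5/4&1\\1&3/2\end{pmatrix},
\]
which has trace $\tfrac{11}4$ and determinant $\tfrac78$. Its eigenvalues are therefore $\tfrac{11\pm\sqrt{65}}8$, and both are positive.

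The largest of the four eigenvalues is $\tfrac{11+\sqrt{65}}8\approx2.38$. Dividing by $4$ gives the claimed bound $\tfrac{11+\sqrt{65}}{32}$. I do not expect a genuine obstacle here; the only care needed is in reading off the symmetric-block entries correctly and confirming that no antisymmetric eigenvalue exceeds the symmetric one. As a sanity check, one may compare with $\cos^4(\pi/8)=\bigl(\tfrac12+\tfrac1{2\sqrt2}\bigr)^2\approx0.7286$, which exceeds $\tfrac{11+\sqrt{65}}{32}\approx0.5957$, consistent with \eqref{eq:main-interval}.
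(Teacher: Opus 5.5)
Your proposal is correct and follows the paper's proof: apply \eqref{eq:cap-bound} at $n=2$ and take the largest eigenvalue of $N_2$. The paper simply states the spectrum $\{\tfrac12,\tfrac34,\tfrac{11\pm\sqrt{65}}8\}$, and your symmetric/antisymmetric decomposition under complementing both bits is a clean, correct way to derive it.
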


\begin{proof}
    The spectrum of the two-copy cap matrix $N_2$ (Eq.~(\ref{eq:N2-cap-matrix}) in Section~\ref{subsec:N2}) is
    \[
        \left\{\frac12,\frac34,
        \frac{11-\sqrt{65}}8,
        \frac{11+\sqrt{65}}8\right\}.
    \]
    Equation~\eqref{eq:cap-bound} gives the result.
\end{proof}

Theorem~\ref{thm:counterexample} and Corollary~\ref{cor:two-copy} give
\eqref{eq:main-interval}, with both endpoints strict relative to
\eqref{eq:previous}.  In the permutation aggregation of
\cite[Theorem~8]{CES25}, the maximal directional caps for the four XOR
displacements $00,01,10,11$ are $1,\tfrac12,\tfrac12,\tfrac12$.  Retaining them
gives
\[
    \frac14\left(1+\frac12+\frac12+\frac12\right)=\frac58,
\]
whereas the block Gram comparison gives $(11+\sqrt{65})/32<5/8$.

\begin{proposition}[General upper bound]\label{prop:general}
    For every $n\ge1$,
    \[
        \omega^*(\QCG_2^{\times n})
        \le2^{-n}\norm{N_n}
        <\left(\frac12+\frac1{2\sqrt2}\right)^n.
    \]
\end{proposition}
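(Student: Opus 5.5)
The first inequality is exactly Eq.~\eqref{eq:cap-bound}, which follows from Eq.~\eqref{eq:gram}, Lemma~\ref{lem:norm-matrix}, and Lemma~\ref{lem:chain}. So the work is to show $\norm{N_n}<(1+1/\sqrt2)^n$. Dividing by $2^n$ then gives $(\tfrac12+\tfrac1{2\sqrt2})^n$. The plan is to compare $N_n$ entrywise with the matrix that encodes the Hamming-distance aggregation of \cite[Theorem~8]{CES25}. We then use positivity of $N_n$ to make the comparison strict.

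\emph{Step 1: an entrywise dominating matrix.} Define $K_n$ by $(K_n)_{x,y}=2^{-t(x,y)/2}$, where $t=t_A+t_B$ is the Hamming distance. Since $\max(t_A,t_B)\ge t/2$, we have $0<(N_n)_{x,y}\le(K_n)_{x,y}$ for all $x,y$. The entries factor over coordinates, so
\[
K_n=K_1^{\otimes n},\qquad K_1=\begin{pmatrix}1&\tfrac1{\sqrt2}\\ \tfrac1{\sqrt2}&1\end{pmatrix}.
\]
The matrix $K_1$ has eigenvalues $1\pm 1/\sqrt2$. By multiplicativity of the operator norm under tensor products, $\norm{K_n}=(1+1/\sqrt2)^n$. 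Lemma~\ref{lem:norm-matrix}, applied with scalar blocks, already gives the non-strict inequality $\norm{N_n}\le\norm{K_n}$.

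\emph{Step 2: strictness.} Both matrices are real symmetric with strictly positive entries, so their operator norms equal their Perron eigenvalues. Let $v$ be a unit Perron eigenvector of $N_n$. By Perron--Frobenius, all entries of $v$ are strictly positive. Then
\[
\norm{N_n}=v^{\mathsf T}N_nv=\sum_{x,y}v_x(N_n)_{x,y}v_y .
\]
The domination $N_n\le K_n$ is strict at the pair $(0^n,0^{n-1}1)$, since there $t_A=1$, $t_B=0$, and $\tfrac12<\tfrac1{\sqrt2}$. Every weight $v_xv_y$ is strictly positive, so the sum is strictly less than $v^{\mathsf T}K_nv\le\norm{K_n}$. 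This holds for every $n\ge1$.

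The only point needing care is this strictness step, and the positive Perron vector resolves it. We should also note that the inequality $\max(t_A,t_B)\ge t/2$ is strict whenever $t_A\ne t_B$, so for every $n\ge1$ such entries exist. Combining the two steps with Eq.~\eqref{eq:cap-bound} yields the proposition.
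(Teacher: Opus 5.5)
Your proof is correct and follows essentially the same route as the paper: you dominate $N_n$ entrywise by the Hamming-distance matrix with entries $2^{-|x\oplus y|/2}$, whose norm is $(1+2^{-1/2})^n$, and you get strictness from the strictly positive Perron eigenvector of $N_n$ together with a strict entry at a chain pair. The only cosmetic difference is that you compute $\norm{K_n}$ via $K_n=K_1^{\otimes n}$ rather than by diagonalizing with the characters $\chi_s$, and these are the same calculation.
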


\begin{proof}
    Let $\widehat N_n$ be the matrix with entries
    \[
        (\widehat N_n)_{x,y}=2^{-|x\oplus y|/2},
    \]
    where $|x\oplus y|$ is the Hamming distance between $x$ and $y$.
    Since $\max(t_A,t_B)\ge(t_A+t_B)/2 = |x\oplus y|/2$ (notation of
    Section~\ref{sec:caps}), Definition~\ref{def:cap} gives
    $0\le N_n\le \widehat N_n$
    entrywise, with strict inequality on every distinct chain pair.

    The characters $\chi_s(y)=(-1)^{s\cdot y}$ diagonalize $\widehat N_n$, where
    $s\cdot y=\sum_js_jy_j\pmod 2$ and $|s|$ is the Hamming weight of $s$.
    Since the entries depend only on
    $x\oplus y$, substituting $z=x\oplus y$ gives
    \begin{align*}
        (\widehat N_n\chi_s)(x)
         & =\sum_{y\in\{0,1\}^n}2^{-|x\oplus y|/2}(-1)^{s\cdot y} \\
         & =\chi_s(x)\prod_{j=1}^n
        \left(1+(-1)^{s_j}2^{-1/2}\right)                         \\
         & =\chi_s(x)
        \left(1+2^{-1/2}\right)^{n-|s|}
        \left(1-2^{-1/2}\right)^{|s|}.
    \end{align*}
    Thus $\chi_s$ is an eigenvector with eigenvalue
    \[
        \left(1+2^{-1/2}\right)^{n-|s|}
        \left(1-2^{-1/2}\right)^{|s|}.
    \]
    These eigenvalues are positive and are maximized at $s=0$. Hence
    \[
        \norm{\widehat N_n}=(1+2^{-1/2})^n
        =2^n\left(\frac12+\frac1{2\sqrt2}\right)^n.
    \]
    Since $N_n$ is symmetric with strictly positive entries, Perron--Frobenius
    gives a positive unit Perron eigenvector $v$, so
    $\norm{N_n}=v^{\mathsf T}N_nv$. The matrix $\widehat N_n-N_n$ is
    entrywise nonnegative and has a positive entry at, for example, the
    chain pair $0^n,10^{n-1}$. Therefore,
    \[
        \norm{N_n}
        =v^{\mathsf T}N_nv
        <v^{\mathsf T}\widehat N_nv
        \leq \norm{\widehat N_n}.
    \]
    Equation~\eqref{eq:cap-bound} proves the claim.
\end{proof}

\section{Conclusion}

We have shown that strong parallel repetition fails for the unrestricted
two-player cloning game: an explicit challenge-dependent two-copy strategy
achieves $(5+\sqrt{17})/16>9/16$, whereas by
Proposition~\ref{prop:challenge-independent} challenge-independent
strategies have value exactly $(3/4)^n$ for every $n$, so any advantage
over the strong parallel-repetition value $(3/4)^n$ requires
challenge-dependent responses. On the upper side, retaining the directional
overlap information from the proof of \cite[Theorem~8]{CES25} in a block
Gram matrix gives the bound $2^{-n}\norm{N_n}$, strictly below the previous
bound for every $n$.  The exact two-copy value remains open in the interval
\eqref{eq:main-interval}.

This lower-bound strategy concerns the unrestricted cloning game.  It does not
by itself yield an attack on the routing quantum position-verification protocol
in the no-pre-shared-entanglement model of \cite[Section~5]{CES25}, because that
model constrains the initial state and the referee's marginal.  Instantaneous
nonlocal quantum computation provides a generic entanglement-assisted attack on
position-verification schemes~\cite{BeigiKonig11}.  For recent progress on
parallel repetition in the adjacent routing setting, see the analysis of the
$f$-BB84 and $f$-routing protocols by Escol\`a-Farr\`as and
Speelman~\cite{EFS25}.

\paragraph{Acknowledgments.}
This work was carried out through the REU in Quantum Information at the
University of South Carolina.  This material is based upon work supported
by the U.S.\ National Science Foundation under Grant No.~2447817.

\end{document}